	\newtheorem{theorem}{Theorem}
    \newtheorem{property}{Property}
    \newtheorem{conjecture}[theorem]{Conjecture}
\newcommand{\gv}[1]{\ensuremath{\text{\boldmath$ #1 $}}}
\newcommand{\abs}[1]{\left| #1 \right|} 
\let\baraccent=\= 
\renewcommand{\=}[1]{\stackrel{#1}{=}} 
    \newlength{\halfpagewidth}
\newcommand{\pref}[1]{\hyperref[#1]{Property~\ref{#1}}}
\newcommand{\conjref}[1]{\hyperref[#1]{Conjecture~\ref{#1}}}
\newcommand{\equationref}[1]{\hyperref[#1]{Eq.~(\ref{#1})}}
\begin{document}

\title{State-Independent Error-Disturbance Tradeoff For Measurement Operators}

\author{S. S. Zhou}
\affiliation{Kuang Yaming Honors School, Nanjing University, Nanjing, Jiangsu 210093, China}
\author{Shengjun Wu}
\affiliation{Kuang Yaming Honors School, Nanjing University, Nanjing, Jiangsu 210093, China}
\author{H.\ F. Chau}
\affiliation{Department of Physics, The University of Hong Kong, Pokfulam Road, Hong Kong}

\begin{abstract}

In general, classical measurement statistics of a quantum measurement is disturbed by performing an additional incompatible quantum measurement beforehand.
Using this observation, we introduce a state-independent definition of disturbance by relating it to the distinguishability problem between two classical statistical distributions --- one resulting from a single quantum measurement and the other from a succession of two quantum measurements.
Interestingly, we find an error-disturbance trade-off relation for any measurements in two-dimensional Hilbert space and for measurements with mutually unbiased bases in any finite-dimensional Hilbert space.
This relation shows that error should be reduced to zero in order to minimize the sum of error and disturbance.
We conjecture that a similar trade-off relation can be generalized to any measurements in an arbitrary finite-dimensional Hilbert space.
\end{abstract}


\maketitle

\section{Introduction}
\label{sec:intro}

The uncertainty principle has been regarded as a fundamental principle in quantum mechanics.
It asserts that we cannot get the precise values of two physical observables in a quantum state, unless they are compatible.
The well-known version of this principle was formulated by Heisenberg in 1927, namely,~\cite{heisenberg1927anschaulichen}
\begin{equation}
\label{eq:HUP_0}
\Delta x \Delta p \geq \hbar/2 .
\end{equation}
A more general form of it can be written as
\begin{equation}
\label{eq:HUP}
\varepsilon(A) \eta(B) \geq \frac{\abs{\bra{\psi} [A,B] \ket{\psi}}}{2},
\end{equation}
where $\varepsilon(A)$ is the error with which measurement of operator $A$ is carried out, and $\eta(B)$ is the disturbance on the following measurement of operator $B$ caused by measurement of $A$. In \equationref{eq:HUP_0}, $\Delta x$ and $\Delta p$ can be interpreted as error and disturbance when $A$ and $B$ are position and momentum operators.
Mathematically, \equationref{eq:HUP} comes from the Robertson's uncertainty relation~\cite{robertson1929uncertainty}:
\begin{equation}
\label{eq:RR}
\sigma(A) \sigma(B) \geq \frac{\abs{\bra{\psi} [A,B] \ket{\psi}}}{2},
\end{equation}
where $\sigma(X) \equiv \braket{\psi|X^2|\psi} - \braket{\psi|X|\psi}^2$ is the standard derivation of observable $X$ in a quantum state $\ket{\psi}$.
Note that while \equationref{eq:RR}, usually regarded as a rigorous version of Heisenberg's uncertainty principle~\cite{john1955mathematical,bohm1951quantum,schiff1955quantum}, can be proven mathematically, the justification of relation \equationref{eq:HUP} is currently on hot debate because additional conditions have been used in its derivation \cite{ozawa2014heisenberg}.
More importantly, several experiments showed that \equationref{eq:HUP} is violated~\cite{sulyok2013violation,baek2013experimental,kaneda2014experimental}.
Thus, the trade-off relation that the higher the precision of measuring $A$, the stronger the disturbance on $B$ cannot be well-captured by \equationref{eq:HUP}.

Many important works in this area have been done, but the definitions of error and disturbance are still not settled~\cite{busch2014colloquium}.
Ozawa used noise-operator based error definition and proposed a ``universally valid error-disturbance relation''~\cite{ozawa2003universally}:
\begin{equation}
\varepsilon(A)\eta(B) + \varepsilon(A)\sigma(B) + \sigma(A)\eta(B) \geq \frac{\abs{\bra{\psi} [A,B] \ket{\psi}}}{2}.
\end{equation}
This uncertainty relation was later verified experimentally~\cite{lund2010measuring,rozema2012violation,erhart2012experimental,sulyok2013violation,
baek2013experimental,kaneda2014experimental,ringbauer2014experimental}
and inspired a lot of work on uncertainty relations~\cite{hall2004prior,branciard2013error,branciard2014deriving},
but some shortcomings were also pointed out~\cite{korzekwa2014operational,dressel2014certainty}.
For example, it seems to violate the proposed operational constraint that the error and disturbance should be non-zero if the outcome distribution is deviated from what is expected according to Born's rule~\cite{korzekwa2014operational}.

Using distance between distributions is another way to quantifying measurement errors~\cite{busch2013proof,buscemi2014noise,coles2015state},
Busch \emph{et al.} proved the original Heisenberg's error-disturbance relation in \equationref{eq:HUP} by defining error and disturbance as figures of merit characteristic of the measuring devices~\cite{busch2013proof,busch2014measurement},
which generated a debate over different approaches used in formalizing uncertainty relations~\cite{ozawa2013disproving,busch2014measurement2}.

In this paper, we introduce a straightforward definition of error and disturbance.
The key observation is that given an arbitrary quantum state, the measurement statistics of a measurement operation on that state is unchanged
if and only if we perform an additional compatible measurement to the state beforehand.
Thus, we may define the disturbance of $\mathcal{B}$ (measurement of operator $B$) due to $\mathcal{A}$ (measurement of operator $A$) as the distance between the two probability distributions of the measurement outcomes due to $\mathcal{B}$ and $\mathcal{B}\circ \mathcal{A}$ maximized over all possible input quantum states.
We introduce the definitions of error and disturbance in \autoref{sec:def} and report a few basic properties of these quantities in \autoref{sec:prop}.
Then, in \autoref{sec:relation}, we prove the error-disturbance trade-off relation for the case of $2$-dimensional Hilbert space.
In particular, we derive a sharp lower bound of the sum of error and disturbance.
We also give the trade-off relation in $d$-dimensional Hilbert space for a special but important case.
Finally, we draw a few conclusions in \autoref{sec:conclusion}.

\section{State-Independent Error And Disturbance Of Projective Measurements}
\label{sec:def}

Suppose one is given a density matrix $\rho$ in a $d$-dimensional Hilbert space with $d\ge 2$.
Let $\mathcal{A}$ be the projective measurements of operator $A$ with rank-one projectors.
(Unless otherwise stated, all measurements in this paper are associated with rank-one projectors.  Note that our discussion can be easily extended to the case of a general positive operator-valued measurement.  We restrain from doing so to avoid unnecessary notational and indexing complications.)
The probability distribution obtained from applying $\mathcal{A}$ to $\rho$ is given by the vector
\begin{equation}
 \label{P_A_defined}
 P_\mathcal{A}(\rho) = \big(p^{(\mathcal{A})}_i(\rho) \big)_{i=1}^d \equiv
 \big(\bra{a_i} \rho \ket{a_i} \big)_{i=1}^d ,
\end{equation}
where $\ket{a_i}\bra{a_i}$ is the rank-one projector corresponding to the $i$\textsuperscript{th} measurement outcome.
We now consider measuring $\rho$ using another projective measurement $\mathcal{A}'$ before feeding the resultant state to $\mathcal{B}$.
We write the probability distribution of the measurement outcomes of $\mathcal{A}'$ by $P_{\mathcal{A}'}(\rho)$.
More importantly, the probability distribution of the final measurement outcomes of $\mathcal{B} \circ \mathcal{A}'$ is given by $P_{\mathcal{B}\circ \mathcal{A}'}(\rho) = P_\mathcal{B}(\rho')$ where
$\rho' = \sum_i \bra{a'_i} \rho \ket{a'_i} \ \ket{a'_i} \bra{a'_i}$ with
$\ket{a'_i}\bra{a'_i}$ being the rank-one projector corresponding to the $i$\textsuperscript{th} measurement outcome of ${\mathcal A}'$.

In general, $P_{{\mathcal B} \circ {\mathcal A}'}(\rho)$ is different from $P_{{\mathcal B}}(\rho)$ as measurement changes the state of a quantum system.
We would like to know how a change in the intermediate measurement ${\mathcal A}'$ affects the change of $P_{{\mathcal B}}(\rho)$ through their classical statistics of their measurement outcomes only.
With this motivation in mind, for any given metric $D(\cdot,\cdot)$ of an Euclidean space, we define
the \textbf{state-dependent error} between $P_\mathcal{A}(\rho)$ and $P_{\mathcal{A}'}(\rho)$,
and the \textbf{state-dependent disturbance} between $P_\mathcal{B}(\rho)$ and $P_\mathcal{B}(\rho')$ as
\begin{equation}
    \label{eq:sd-error_def}
    \varepsilon_{\rho} ( \mathcal{A}, \mathcal{A}') = D(P_\mathcal{A}(\rho), P_{\mathcal{A}'}(\rho))
\end{equation}
and
\begin{equation}
   \label{sd-disturbance_def}
    \eta_{\rho}(  \mathcal{A}', \mathcal{B}) = D(P_{\mathcal{B}}(\rho), P_{\mathcal{B}}(\rho')) ,
\end{equation}
respectively.
Since our goal is to study the maximum pointwise deviation in the distribution of the measurement outcomes, we use the metric based on the infinity norm, namely,
\begin{equation}
 \label{eq:inf-norm}
 D(x,y) = \max_i |x_i - y_i| .
\end{equation}

We now define the \textbf{state-independent error} and
\textbf{state-independent disturbance} by
\begin{equation}
    \label{sind-error_def}
    \varepsilon(\mathcal{A}, \mathcal{A}') = \max_\rho \ \varepsilon_{\rho}(  \mathcal{A}, \mathcal{A}')
\end{equation}
and
\begin{equation}
 \label{sind-disturbance_def}
    \eta(\mathcal{A}', \mathcal{B}) = \max_\rho \ \eta_{\rho}(  \mathcal{A}', \mathcal{B}) .
\end{equation}
From now on, the terms ``error'' and ``disturbance'' refer to the state-independent versions unless otherwise stated.
Note that these definitions meet the proposed operational constraint \cite{korzekwa2014operational} for $\varepsilon(\mathcal{A}, \mathcal{A}') = 0$ if and only if $\mathcal{A} = \mathcal{A}'$ and
$\eta(\mathcal{A}', \mathcal{B}) = 0$ if and only if $\mathcal{A}' = \mathcal{B}$.

Finally,
to obtain a trade-off relation between error and disturbance in one measurement, that is,
to find out how much we need to sacrifice on one to lower the other, just as what Heisenberg did,
we introduce the \textbf{state-independent overall error}
\begin{equation}
    \label{sind-overall-error_def}
    \Delta(\mathcal{A}, \mathcal{A}', \mathcal{B}) = \max_\rho \left( \varepsilon_{\rho}(  \mathcal{A}, \mathcal{A}') + \eta_{\rho}(  \mathcal{A}', \mathcal{B})\right).
\end{equation}
Clearly, $\varepsilon + \eta \geq \Delta$.

\section{Basic Properties Of State-Independent Error And Disturbance}
\label{sec:prop}

Since
\begin{equation}
\begin{gathered}
\varepsilon_{\rho}(\mathcal{A}, \mathcal{A}') = \max_i \abs{\mathrm{tr} \left( \rho ( \ket{a_i}\bra{a_i} - \ket{a_i'}\bra{a_i'} ) \right)}\\
\eta_{\rho}(\mathcal{A}', \mathcal{B}) = \max_i \big| \mathrm{tr} \big( \rho ( \ket{b_i}\bra{b_i} - \sum_j \ket{a_j'} |\braket{a_j' | b_i}|^2 \bra{a_j'} )\big) \big|,
\end{gathered}
\end{equation}
we have
\begin{equation}
\label{eq:err-spec_rad}
\varepsilon(\mathcal{A},\mathcal{A}') = \max_i \ R \left( \ket{a_i}\bra{a_i} - \ket{a_i'}\bra{a_i'}\right)
\end{equation}
and
\begin{equation}
\label{eq:disturbance-spec_rad}
\eta(\mathcal{A}',\mathcal{B}) = \max_i \ R \big(\ket{b_i}\bra{b_i} - \sum_k \abs{\braket{b_i | a_k'}}^2 \ket{a_k'} \bra{a_k'} \big) .
\end{equation}
Here, $R(\cdot)$ is the spectral radius of a matrix (the largest of absolute values of the eigenvalues).
Similarly, we have
\begin{align}
\Delta(\mathcal{A},\mathcal{A}',\mathcal{B}) =& \max_{i,j,\pm}
\ R \big( \ket{a_i}\bra{a_i} - \ket{a_i'}\bra{a_i'} \pm \nonumber \\
&\ket{b_j}\bra{b_j} \mp \sum_k \abs{\braket{b_j | a_k'}}^2 \ket{a_k'} \bra{a_k'} \big).
\end{align}
Note that the maximum of $\varepsilon_\rho$ can be attained by a pure state $\rho$; and similarly for $\eta_\rho$ and $\varepsilon_\rho+\eta_\rho$.

\begin{property}[Maximum Error and Disturbance]
\label{prop:pp1} The error satisfies
\begin{equation}
\varepsilon(\mathcal{A},\mathcal{A}') \leq 1
\end{equation}
for any ${\mathcal A}, {\mathcal A}'$, with equality if and only if $\braket{a_i'|a_i} = 0$ for some $i$.
In addition, the disturbance obeys
\begin{equation}
\label{eq:disturbance_upper_bound}
\eta(\mathcal{A}', \mathcal{B}) \leq 1 - 1/d;
\end{equation}
for any ${\mathcal A}', {\mathcal B}$, with equality if and only if there is an $i$ such that $\ket{b_i}$ is unbiased in $\{\ket{a_j'}\}_{j=1}^d$.
That is to say, $|\braket{a'_j|b_i}|^2 = 1/d$ for any $j$.
\end{property}
\begin{proof}
The rank of the matrix $\ket{a_i}\bra{a_i} - \ket{a_i'}\bra{a_i'}$ is at most~2.
Thus, the spectral radius of this matrix can be calculated easily as $\ket{a_i}$ and $\ket{a'_i} - \braket{a_i|a'_i} \ket{a_i}$ are orthogonal.
Hence, \equationref{eq:err-spec_rad} becomes
\begin{equation}
\label{eq:expression_error}
\varepsilon(\mathcal{A},\mathcal{A}') = \max_i \sqrt{1 - |\braket{a_i'|a_i}|^2}.
\end{equation}
Consequently, $\varepsilon(\mathcal{A},\mathcal{A}') \le 1$ with equality holds when there exists $i$ such that $\braket{a_i'|a_i} = 0$.

\equationref{eq:disturbance-spec_rad} can be rewritten as
\begin{equation}
\eta(\mathcal{A}',\mathcal{B}) = \max_{\psi,i} \big| \abs{\braket{\psi | b_i}}^2
- \sum_k \abs{\braket{\psi | a_k'}}^2\abs{\braket{b_i | a_k'}}^2 \big| .
\end{equation}
Note that the spectral radius of the matrix $\ket{b_i}\bra{b_i} - \sum_k \abs{\braket{b_i | a_k'}}^2 \ket{a_k'} \bra{a_k'} $ is unchanged by multiplying a phase $e^{i\delta}$ to $\ket{a_k'}$.
Hence, we may always set each $\braket{b_i|a_k'}$ to be real and non-negative.
Then by the Perron-Frobenius theorem~\cite{bapat1997nonnegative}, there is always a positive eigenvalue corresponding to the spectral radius.  In other words,
\begin{equation}
\label{eq:perron}
\eta(\mathcal{A}',\mathcal{B}) = \max_{\psi,i} \abs{\braket{\psi | b_i}}^2
- \sum_k \abs{\braket{\psi | a_k'}}^2\abs{\braket{b_i | a_k'}}^2.
\end{equation}
Using the Cauchy-Schwarz inequality,
\begin{align}
\abs{\braket{\psi | b_i}}^2 \leq&  \Big( \sum_k \abs{\braket{\psi | a_k'}}\abs{\braket{a_k'| b_i }} \Big)^2 \nonumber \\
\leq& \sum_k \abs{\braket{\psi | a_k'}}^2\abs{\braket{b_i | a_k'}}^2 \sum_k 1^2.
\end{align}
So $\abs{\braket{\psi | a_k'}}^2\abs{\braket{b_i | a_k'}}^2 \geq \abs{\braket{\psi | b_i}}^2/d$.
Hence, \equationref{eq:disturbance_upper_bound} is true.
Moreover, the equality holds if and only if there exists an $i$ such that $\abs{\braket{\psi|b_i}} = 1$ and $\abs{\braket{b_i | a_1'}}^2 = \abs{\braket{b_i | a_2'}}^2 = \cdots = \abs{\braket{b_i | a_d'}}^2$.
In other words, $\ket{b_i}$ is unbiased in $\{\ket{a_k'}\}_{k=1}^{d}$.
\end{proof}

\pref{prop:pp1} can be understood as follows.
Suppose we take $\rho = \ket{a_i}\bra{a_i}$, the measurement operation $\mathcal{A}$ will give a post-measurement state $\ket{a_i}$ for sure.
In contrast, when we perform the measurement $\mathcal{A'}$, we have no chance of getting the state $\ket{a'_i}$ if $\ket{a_i}$ and $\ket{a'_i}$ are orthogonal.
This is a situation in which the error between the two measurement outcomes of $\mathcal{A}$ and $\mathcal{A'}$ on the same state $\rho$ is maximized.
In addition, suppose there is an $i$ such that $\ket{b_i}$ is unbiased in $\{\ket{a'_j}\}_{j=1}^d$.
Then by choosing $\rho = \ket{b_i}\bra{b_i}$, the measurement operation $\mathcal{A'}$ will completely erase the information on the distribution of $\rho$ in $\{\ket{b_j}\}_{j=1}^d$.  Hence, the disturbance $\eta({\mathcal A}',{\mathcal B})$ is maximized.

\begin{property}[Reducibility]
\label{pp2}
Suppose $A$, $A'$ and $B$ are operators on a Hilbert space $\gv{H} = \gv{H}_1\oplus\gv{H}_2\oplus\cdots\oplus\gv{H}_l$
such that $\ket{a_i}$'s and $\ket{a'_i}$'s are in $\gv{H}_1\cup\gv{H}_2\cup\cdots\cup\gv{H}_l$, then
\begin{equation}
\label{eq:pp2_error}
\varepsilon(\mathcal{A},\mathcal{A}') = \max_{1 \leq k \leq l} \varepsilon(\mathcal{A}|_{\gv{H}_k}, \mathcal{A}'|_{\gv{H}_k}).
\end{equation}
If we substitute $\ket{b_i}$'s for $\ket{a_i}$'s in above conditions, we have similarly
\begin{equation}
\label{eq:pp2_disturbance}
\eta(\mathcal{A}',\mathcal{B}) = \max_{1 \leq k \leq l} \eta(\mathcal{A}'|_{\gv{H}_k}, \mathcal{B}|_{\gv{H}_k}).
\end{equation}
\end{property}
\begin{proof}
We only prove \equationref{eq:pp2_disturbance} as the proof of
\equationref{eq:pp2_error} is similar.
Let $\gv{B}_k = \mathrm{Span}\{\ket{b_i} \colon \ket{b_i}\in \gv{H}_k\}$.
Obviously, $\gv{B}_k \subseteq \gv{H}_k$.
And since $\gv{H} = \gv{B}_1\oplus\gv{B}_2\oplus\cdots\oplus\gv{B}_l$, we know that $\gv{H}_k = \gv{B}_k$.
Therefore, for any $\ket{b_i}, \ket{a'_j}$ not in same $\gv{H}_k$, $\braket{b_i | a'_j} = 0$.
Suppose $\ket{b_i} \in \gv{B}_{k_i} = \gv{H}_{k_i}$, \equationref{eq:disturbance-spec_rad} becomes
\begin{align}
& \eta(\mathcal{A}',\mathcal{B}) \nonumber \\
={}&\max_i R \big(\ket{b_i}\bra{b_i} - \sum_{\ket{a'_j} \in \gv{H}_{k_i}} \abs{\braket{b_i | a'_j}}^2 \ket{a'_j} \bra{a'_j} \big) \nonumber \\
={}&\max_{\substack{k ,\; i \in {\{i'|k_{i'} = k\}}}} R \big(\ket{b_i}\bra{b_i} - \sum_{\ket{a'_j} \in \gv{H}_{k_i}} \abs{\braket{b_i | a'_j}}^2 \ket{a'_j} \bra{a'_j} \big) \nonumber \\
={}&\max_k \eta(\mathcal{A}'|_{\gv{H}_{k}},\mathcal{B}|_{\gv{H}_{k}}) .
\end{align}
Thus, \equationref{eq:pp2_disturbance} holds.
\end{proof}
\begin{property}[Subsystems]
\label{pp2_prime}
Suppose $A$, $A'$ and $B$ are operators on the Hilbert space $\gv{H} = \gv{H}_1\otimes\gv{H}_2\otimes\cdots\otimes\gv{H}_l$, then
\begin{equation}
\label{eq:pp2_prime_error}
\varepsilon(\mathcal{A},\mathcal{A}') \geq \max_{1 \leq k \leq l} \varepsilon(\mathcal{A}|_{\gv{H}_k}, \mathcal{A}'|_{\gv{H}_k})
\end{equation}
and
\begin{equation}
\label{eq:pp2_prime_disturbance}
\eta(\mathcal{A}',\mathcal{B}) \geq \max_{1 \leq k \leq l}  \eta(\mathcal{A}'|_{\gv{H}_k}, \mathcal{B}|_{\gv{H}_k}) .
\end{equation}
\end{property}
\begin{proof}
Suppose $l=2$.
Then, $\ket{a'_j}$ can be represented by $\ket{{a'}^{1}_{m'}} \otimes \ket{{a'}^{2}_{n'}}$,
where $\ket{{a'}^{1}_{m'}} \in {\gv{H}_1}$ and $\ket{{a'}^{2}_{n'}} \in {\gv{H}_2}$.
Similarly, we write $\ket{b_i} = \ket{b^{1}_m} \otimes \ket{b^{2}_n}$.
Then,
\begin{align}
& R \big(\ket{b_i}\bra{b_i} - \sum_j \abs{\braket{b_i | a'_j}}^2 \ket{a'_j} \bra{a'_j} \big) \nonumber \\
\geq{}& \max_{\ket{\psi_1}, \ket{\psi_2}} \abs{\braket{\psi_1|b^1_m}}^2 \abs{\braket{\psi_2|b^2_n}}^2
 \nonumber \\
& \quad - \sum_{m',n'} \abs{\braket{\psi_1|{a'}^1_{m'}}}^2 \abs{\braket{\psi_2|{a'}^2_{n'}}}^2 \abs{\braket{b^1_m|{a'}^1_{m'}}}^2 \abs{\braket{b^2_n|{a'}^2_{n'}}}^2 \nonumber \\
\geq{}& \max_{\ket{\psi_1}} \abs{\braket{\psi_1|b^1_m}}^2
- \sum_{m',n'} \abs{\braket{\psi_1|{a'}^1_{m'}}}^2 \abs{\braket{b^2_n|{a'}^2_{n'}}}^4 \abs{\braket{b^1_m|{a'}^1_{m'}}}^2 \nonumber \\
\geq{}& \max_{\ket{\psi_1}} \abs{\braket{\psi_1|b^1_m}}^2 - \sum_{m'} \abs{\braket{\psi_1|{a'}^1_{m'}}}^2 \abs{\braket{b^1_m|{a'}^1_{m'}}}^2
\nonumber \\
\geq{}& R \big(\ket{b^1_m}\bra{b^1_m} - \sum_{m'} \abs{\braket{b^1_m | {a'}^1_{m'}}}^2 \ket{{a'}^1_{m'}} \bra{{a'}^1_{m'}} \big).
\end{align}
We conclude that $\eta(\mathcal{A},\mathcal{B}) \geq \eta(\mathcal{A}|_{{\gv{H}_1}}, \mathcal{B}|_{{\gv{H}_1}})$.
By induction of $l$, we show the validity of \equationref{eq:pp2_prime_disturbance}.
\equationref{eq:pp2_prime_error} can be proven in a similar way.
\end{proof}

\begin{property}[Upper bound Estimation]
\label{pp3}
\begin{equation}
\label{eq:upper1}
\eta(\mathcal{A}',\mathcal{B}) \leq \max_i \left[\Big(1 - \frac{1}{d}\Big)\Big(1 - \sum_k \abs{\braket{a'_k|b_i}}^4\Big)\right]^{1/2}.
\end{equation}
The equality holds when $d = 2$. When $d > 2$, the equality holds if and only if there exists an $i$ such that $\ket{b_i}$ is unbiased in $\{\ket{a'_j}\}_{j=1}^d$, or $\mathcal{A}' = \mathcal{B}$.
Furthermore,
\begin{equation}
\label{eq:upper2}
\eta(\mathcal{A}',\mathcal{B}) \leq \max_{i,j} \abs{\braket{a'_j|b_i}} \sum_{k \neq j} \abs{\braket{a'_k|b_i}}.
\end{equation}
The equality holds when $d = 2$. When $d > 2$, the equality holds if and only if for an $i$ which maximizes $\max_{j} \abs{\braket{a'_j|b_i}} \sum_{k \neq j} \abs{\braket{a'_k|b_i}}$, for all $j,k$ such that $\braket{a'_{j,k}|b_i} \neq 0$, $\abs{\braket{a'_j|b_i}} = \abs{\braket{a'_k|b_i}}$.
\end{property}
\begin{proof}
We first prove \equationref{eq:upper1}.
Suppose the eigenvalues of the matrix
$M = \ket{b_i}\bra{b_i} - \sum_{j=1}^d \abs{\braket{b_i | a'_j}}^2 \ket{a'_j} \bra{a'_j}$ be $\lambda_j$'s.
Arrange these eigenvalues so that
$|\lambda_1| \leq |\lambda_2| \leq \cdots \leq |\lambda_d |= \eta(\mathcal{A}',\mathcal{B})$.
Since $\mathrm{tr}(M) = 0$, we have $ \sum_{i=1}^d \lambda_i = 0$.
Moreover,
\begin{equation}
\begin{aligned}
\mathrm{tr}(M^2) =& \lambda_d^2 + \sum_{i=1}^{d-1} \lambda_i^2 \geq
\lambda_d^2 + \frac{1}{d-1} \bigg( \sum_{i=1}^{d-1} \lambda_i \bigg)^2
 \\
=& \frac{d \lambda_d^2}{d-1} ,
\end{aligned}
\end{equation}
and $\mathrm{tr}(M^2) = 1 - \sum_k \abs{\braket{a'_k|b_i}}^4$.
Hence, \equationref{eq:upper1} is true.
Moreover, the equality holds if and only if $\lambda_1 = \lambda_2 = \ldots = \lambda_{d-1} = - \lambda_{d}/(d-1)$.
A straightforward calculation shows that if $\ket{b_i}$ is unbiased in $\{\ket{a'_j}\}_{j=1}^d$ or there exists a $j$ such that $|\braket{a'_j|b_i}| = 1$, the above equations hold.

Next, we prove \equationref{eq:upper2}.
According to the Frobenius theorem~\cite{butler2013sharp,derzko1965bounds},
\begin{equation}
|\lambda_d| \leq \max_j \sum_{k=1}^d |M_{kj}| = \max_j \abs{\braket{a'_j|b_i}} \sum_{k \neq j} \abs{\braket{a'_k|b_i}}.
\end{equation}
And this equality holds if and only if $d=2$, or for all $j,k$ such that $\braket{a'_{j,k}|b_i} \neq 0$, $\abs{\braket{a'_j|b_i}} = \abs{\braket{a'_k|b_i}}$.
\end{proof}

From \equationref{eq:sd-error_def}--\equationref{eq:inf-norm}, we have
\begin{equation}
\eta^c(\mathcal{A}', \mathcal{B}) \equiv \max_{\rho=\ket{b_i}\bra{b_i}} \eta_\rho(\mathcal{A}', \mathcal{B}) = \max_i \Big( 1 - \sum_k \abs{\braket{a'_k|b_i}}^4 \Big),
\end{equation}
where $\eta^c(\mathcal{A}',\mathcal{B})$ is the so-called calibration disturbance~\cite{busch2014colloquium},
which measures the disturbance with respect to eigenvectors of $B$ as original states.
Moreover, it can also be interpreted as the maximum among probabilities that an original eigenstate $\ket{b_i}$ evolves into
another eigenstate $\ket{b_j}(j\neq i)$ after $\mathcal{B} \circ \mathcal{A'}$.
Calibration disturbance is another way to define a state-independent disturbance~\cite{busch2013proof}.
We already know from the definitions of worst-case disturbance and calibration disturbance obeys
$\eta^c(\mathcal{A}',\mathcal{B}) \leq \eta(\mathcal{A}',\mathcal{B})$.
What \equationref{eq:upper1} shows is that
\begin{equation}
\eta(\mathcal{A}',\mathcal{B}) \leq \left[\Big(1 - \frac{1}{d}\Big) \cdot \eta^c(\mathcal{A}', \mathcal{B}) \right]^{1/2}.
\end{equation}
This relation provides an upper bound of worst-case disturbance $\eta(\mathcal{A}',\mathcal{B})$ -- the geometric mean of $\eta^c(\mathcal{A}',\mathcal{B})$ and the maximum possible disturbance for measurements in $d$-dimensional Hilbert space, namely, $1-1/d$.
Incidentally, for $\varepsilon(\mathcal{A},\mathcal{A}')$, we have
\begin{equation}
\varepsilon(\mathcal{A},\mathcal{A}') = \left[\varepsilon^c(\mathcal{A},\mathcal{A}')\right]^{1/2},
\end{equation}
where $\varepsilon^c(\mathcal{A}',\mathcal{A})$ is defined analogously.

In \autoref{fig:estimation}, we show performance of these two estimations for the case of $d=3$.
We find that the two upper bounds have their own advantages in different situations.

\begin{figure}[ht]
    \centering
    \includegraphics[width=3.2in]{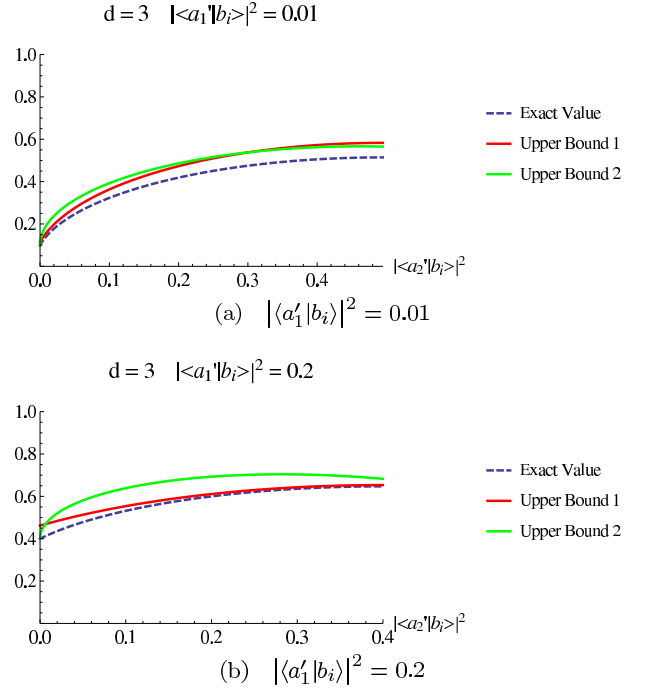}\\
    \caption{(color online) Performance of the upper bound estimation in $d=3$ case: We fix the value of $|\braket{a'_1 | b_i}|^2$,
    and plot the value (before choosing maximum among different $i$'s) of $\eta(\mathcal{A}',\mathcal{B})$  and the two upper bounds
    (``Upper Bound 1'' corresponds to \equationref{eq:upper1} and ``Upper Bound 2'' corresponds to \equationref{eq:upper2}),
    choosing $|\braket{a'_2 | b_i}|^2$ as independent variable.
    In this figure, we assume $|\braket{a'_1 | b_i}|^2 \leq |\braket{a'_2 | b_i}|^2 \leq |\braket{a'_3 | b_i}|^2$.}
    \label{fig:estimation} 
\end{figure}

\section{Trade-off Relation}
\label{sec:relation}

We now report the trade-off relation between error and disturbance.
\begin{theorem}[Trade-off in Two-Dimensional Hilbert space]
\label{thm:thm1}
When $d=2$, the following trade-off between error and disturbance holds:
\begin{equation}
\varepsilon(\mathcal{A},\mathcal{A}') + \eta(\mathcal{A}',\mathcal{B}) \geq \eta(\mathcal{A},\mathcal{B}),
\end{equation}
\begin{equation}
\Delta(\mathcal{A},\mathcal{A}',\mathcal{B}) \geq \eta(\mathcal{A},\mathcal{B}),
\end{equation}
where $\eta(\mathcal{A},\mathcal{B}) = \frac{1}{2}\sin\theta$.
Moreover, the equalities are both attained when $\ket{a_i'} = \ket{a_i}$ for all $i$.
Here $\theta = \arccos \abs{\gv{a} \cdot \gv{b}}$ with $\gv{a},\gv{b}$ being the Bloch vectors of eigenstates of $A$ and $B$.
\end{theorem}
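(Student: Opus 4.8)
The plan is to pass to the Bloch-ball picture for $d=2$, where every density matrix is a Bloch vector $\gv{r}$ with $|\gv{r}|\le1$, every rank-one projector a unit Bloch vector, and the two eigenprojectors of a projective measurement have antipodal Bloch vectors; write $\gv{a},\gv{a}',\gv{b}$ for the Bloch vectors of $\ket{a_1},\ket{a_1'},\ket{b_1}$. Flipping the label $1\leftrightarrow2$ of a measurement only negates the corresponding Bloch vector and leaves invariant every inner product that appears in \pref{prop:pp1} and \pref{pp3}, so the $\max_i$ in \equationref{eq:expression_error} and \equationref{eq:upper2} is attained by both $i$; a short computation then gives $\varepsilon(\mathcal{A},\mathcal{A}')=\sqrt{(1-\gv{a}\cdot\gv{a}')/2}=\frac{1}{2}|\gv{a}-\gv{a}'|$, $\eta(\mathcal{A}',\mathcal{B})=\frac{1}{2}\sqrt{1-(\gv{a}'\cdot\gv{b})^2}=\frac{1}{2}|\gv{a}'\times\gv{b}|$, and likewise $\eta(\mathcal{A},\mathcal{B})=\frac{1}{2}|\gv{a}\times\gv{b}|=\frac{1}{2}\sin\theta$ because $\theta=\arccos|\gv{a}\cdot\gv{b}|$. (If $\gv{a}=\pm\gv{b}$, i.e.\ $\theta=0$, all three vanish and there is nothing to prove, so assume $\theta>0$.)

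The first inequality is now purely a statement about three unit vectors, $|\gv{a}\times\gv{b}|\le|\gv{a}-\gv{a}'|+|\gv{a}'\times\gv{b}|$, which I would prove in one line: write $\gv{a}\times\gv{b}=(\gv{a}-\gv{a}')\times\gv{b}+\gv{a}'\times\gv{b}$, apply the triangle inequality, and use $|(\gv{a}-\gv{a}')\times\gv{b}|\le|\gv{a}-\gv{a}'|\,|\gv{b}|=|\gv{a}-\gv{a}'|$.

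The bound on $\Delta$ takes more care, since $\varepsilon+\eta\ge\Delta$ points the wrong way; I must exhibit a single input state whose \emph{state-dependent} overall error is already at least $\frac{1}{2}\sin\theta$. The right choice is the pure state $\rho^{*}$ with Bloch vector $\gv{r}=\gv{c}/|\gv{c}|$, where $\gv{c}=\gv{b}-(\gv{a}\cdot\gv{b})\gv{a}$ is the component of $\gv{b}$ orthogonal to $\gv{a}$; this is exactly the state realizing $\eta(\mathcal{A},\mathcal{B})$, and its decisive properties are $\gv{r}\cdot\gv{a}=0$ and $\gv{r}\cdot\gv{b}=|\gv{c}|=\sin\theta$. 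Reducing the state-dependent formulas of \autoref{sec:prop} to Bloch vectors (again the two $i$'s coincide) gives $\varepsilon_{\rho^{*}}(\mathcal{A},\mathcal{A}')=\frac{1}{2}|\gv{r}\cdot(\gv{a}-\gv{a}')|=\frac{1}{2}|s|$ and $\eta_{\rho^{*}}(\mathcal{A}',\mathcal{B})=\frac{1}{2}|\gv{r}\cdot(\gv{b}-(\gv{a}'\cdot\gv{b})\gv{a}')|=\frac{1}{2}|\sin\theta-ws|$, where $s=\gv{r}\cdot\gv{a}'$ and $w=\gv{a}'\cdot\gv{b}$ with $|w|\le1$. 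Hence
\begin{equation*}
\varepsilon_{\rho^{*}}+\eta_{\rho^{*}}(\mathcal{A}',\mathcal{B})=\tfrac{1}{2}\big(|s|+|\sin\theta-ws|\big)\ge\tfrac{1}{2}\big(|ws|+|\sin\theta-ws|\big)\ge\tfrac{1}{2}\sin\theta ,
\end{equation*}
so $\Delta(\mathcal{A},\mathcal{A}',\mathcal{B})\ge\frac{1}{2}\sin\theta=\eta(\mathcal{A},\mathcal{B})$; this also re-proves the first inequality via $\varepsilon+\eta\ge\Delta$.

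For the equality clause, if $\ket{a_i'}=\ket{a_i}$ then $\gv{a}'=\gv{a}$, so $\varepsilon(\mathcal{A},\mathcal{A}')=0$ and $\eta(\mathcal{A}',\mathcal{B})=\eta(\mathcal{A},\mathcal{B})$, making the first inequality an equality; together with $\eta(\mathcal{A},\mathcal{B})\le\Delta\le\varepsilon+\eta=\eta(\mathcal{A},\mathcal{B})$ this forces $\Delta=\eta(\mathcal{A},\mathcal{B})$ as well. The one genuinely non-routine step is guessing the trial state $\rho^{*}$ for the $\Delta$ bound; once the orthogonality $\gv{r}\cdot\gv{a}=0$ is noticed, the rest is the triangle inequality together with $|\gv{a}'\cdot\gv{b}|\le1$.
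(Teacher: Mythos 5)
Your proof is correct, and it lives in the same Bloch-vector framework as the paper's: both reduce $\varepsilon$, $\eta$, $\varepsilon_\rho$, $\eta_\rho$ to the same expressions $\frac{1}{2}\abs{\gv{a}-\gv{a'}}$, $\frac{1}{2}\abs{\gv{b}-(\gv{a'}\cdot\gv{b})\gv{a'}}$, etc., and both ultimately rest on the triangle inequality plus $\abs{\gv{a'}\cdot\gv{b}}\le 1$. The execution of the two final bounds differs, though. For the sum inequality the paper rescales the error term by $\abs{\gv{b}\cdot\gv{a'}}$ and invokes $\min_x\frac{1}{2}\abs{x\gv{a}-\gv{b}}=\frac{1}{2}\abs{\gv{a}\times\gv{b}}$, whereas you use the identity $\gv{a}\times\gv{b}=(\gv{a}-\gv{a'})\times\gv{b}+\gv{a'}\times\gv{b}$ — a one-line alternative. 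For the $\Delta$ bound the paper manipulates $\max_{\gv{n}}$ algebraically (rescale, recombine into a single linear functional of $\gv{n}$, then maximize), while you exhibit the explicit witness state $\gv{r}=\gv{c}/\abs{\gv{c}}$ with $\gv{c}=\gv{b}-(\gv{a}\cdot\gv{b})\gv{a}$ and check directly that its state-dependent overall error is at least $\frac{1}{2}\sin\theta$. Your witness-state route is arguably cleaner and more transparent (it isolates where the bound comes from, and it subsumes the first inequality via $\varepsilon+\eta\ge\Delta$, so you only need one argument), at the cost of having to guess the right state and of treating the degenerate case $\theta=0$ separately; the paper's route avoids any choice of state but is harder to read. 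Your observation that the label flip $1\leftrightarrow 2$ makes the $\max_i$ degenerate is a detail the paper glosses over but is easily verified. No gaps.
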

\begin{proof}
We use Bloch vectors to represent density matrix and different bases.  That is,
\begin{gather}
\rho = \frac{I + { {\gv{n}}}\cdot\gv{\sigma}}{2},
\quad \ket{a_i}\bra{a_i}_{i=1,2} = \frac{I \pm { {\gv{a}}}\cdot\gv{\sigma}}{2},
 \notag \\
\quad \ket{b_i}\bra{b_i}_{i=1,2} = \frac{I \pm { {\gv{b}}}\cdot\gv{\sigma}}{2},
\quad \ket{a_i'}\bra{a_i'}_{i=1,2} = \frac{I \pm {{\gv{a'}}}\cdot\gv{\sigma}}{2},
\end{gather}
where $\gv{a},\gv{a'},\gv{b}$ are unit vectors on Block sphere and $\gv{n}$ is in/on Bloch sphere ($\abs{\gv{n}}\leq 1$).
Straightforward calculations gives
\begin{equation}
\begin{aligned}
\varepsilon_\rho(\mathcal{A},\mathcal{A}') =& \frac{1}{2}\abs{(\gv{a} - \gv{a'})\cdot \gv{n}}\\
\leq& \frac{1}{2}\abs{\gv{a} - \gv{a'}} =  \varepsilon(\mathcal{A},\mathcal{A}')
\end{aligned}
\end{equation}
and
\begin{equation}
\begin{aligned}
\eta_\rho(\mathcal{A}',\mathcal{B}) =& \frac{1}{2}\abs{(\gv{b} - \gv{b}\cdot\gv{a'}\gv{a'}) \cdot \gv{n}}  \\
\leq &\frac{1}{2}\abs{\gv{b} - \gv{b}\cdot\gv{a'}\gv{a'}} =  \eta(\mathcal{A}',\mathcal{B}) .
\end{aligned}
\end{equation}
On one hand,
\begin{align}
&\eta(\mathcal{A}',\mathcal{B}) + \varepsilon(\mathcal{A},\mathcal{A}') \nonumber \\
={}& \frac{1}{2}\abs{\gv{a} - \gv{a'}}+\frac{1}{2}\abs{\gv{b} - \gv{b}\cdot\gv{a'}\gv{a'}} \nonumber \\
\geq{}& \frac{1}{2}\abs{\gv{b}\cdot\gv{a'} \gv{a} - \gv{b}\cdot\gv{a'} \gv{a'}}+\frac{1}{2}\abs{\gv{b} - \gv{b}\cdot\gv{a'}\gv{a'}} \nonumber \\
\geq{}& \min_x \frac{1}{2}\abs{x \gv{a} - \gv{b}} = \frac{1}{2}\abs{(1 - \gv{a}\gv{a})\cdot\gv{b}} = \frac{1}{2}\abs{\gv{a}\times\gv{b}}
\end{align}
and
\begin{align}
&\Delta(\mathcal{A},\mathcal{A}',\mathcal{B}) \nonumber \\
={}& \max_\gv{n} ( \frac{1}{2}\abs{(\gv{a} - \gv{a'})\cdot \gv{n}} + \frac{1}{2}\abs{(\gv{b} - \gv{b}\cdot\gv{a'}\gv{a'}) \cdot \gv{n}}) \nonumber \\
\geq{}& \max_\gv{n} \frac{1}{2}\abs{\gv{b}\cdot\gv{a'}}\frac{1}{2}\abs{( \gv{a} -  \gv{a'}) \cdot \gv{n}}+\frac{1}{2}\abs{(\gv{b} - \gv{b}\cdot\gv{a'}\gv{a'})\cdot \gv{n}} \nonumber \\
\geq{}& \frac{1}{2}\abs{\gv{b} - \gv{b}\cdot\gv{a'}\gv{a}} \geq \frac{1}{2}\abs{(1 - \gv{a}\gv{a})\cdot\gv{b}} = \frac{1}{2}\abs{\gv{a}\times\gv{b}}.
\end{align}
On the other hand, when $\gv{a' }= \gv{a}$, $\eta(\mathcal{A}',\mathcal{B}) + \varepsilon(\mathcal{A},\mathcal{A}') =\Delta(\mathcal{A},\mathcal{A}',\mathcal{B})= \frac{1}{2}\abs{\gv{b} - \gv{b}\cdot\gv{a}\gv{a}}$.
Therefore, $\min_{\mathcal{A}'}\eta(\mathcal{A}',\mathcal{B}) + \varepsilon(\mathcal{A},\mathcal{A}') = \min_{\mathcal{A}'} \Delta(\mathcal{A},\mathcal{A}',\mathcal{B}) = \frac{1}{2}\abs{\gv{a}\times\gv{b}}$.
\end{proof}

This result gives a lower bound of the overall error and the sum of error and disturbance,
and therefore provide a version of uncertainty principle in the case of two-dimensional Hilbert space.
We could make $\sin \theta = 1$ by choosing $\gv{a}\cdot\gv{b}=0$.
This means that the overall error is the largest
when $\{\ket{a_i}\}_{i=1}^d$ and $\{\ket{b_i}\}_{i=1}^d$ are mutually unbiased.

\autoref{fig:sketch} shows the performance of the version of uncertainty principle in \autoref{thm:thm1}.
It shows that the values of $\varepsilon(\mathcal{A},\mathcal{A}') + \eta(\mathcal{A}',\mathcal{B})$ and $\Delta(\mathcal{A},\mathcal{A}',\mathcal{B})$ do not change smoothly because we have absolute values, maxima and minima in the definitions of error and disturbance.
In particular, we find that $\varepsilon(\mathcal{A},\mathcal{A}') + \eta(\mathcal{A}',\mathcal{B})$ and $\Delta(\mathcal{A},\mathcal{A}',\mathcal{B})$ are not differentiable near $\mathcal{A}'=\mathcal{A}$.

On the other hand, it turns out that the minima of the overall error $\Delta$
and the sum of error and disturbance are both attained when $\{\ket{a'_i}\}_{i=1,2}$ are exactly $\{\ket{a_i}\}_{i=1,2}$,
regardless of the choice of $\{\ket{b_i}\}_{i=1,2}$.
This is somewhat unexpected for one would guess
that $\mathcal{A}'$ should be somewhere ``between'' $\mathcal{A}$ and $\mathcal{B}$ in order to minimize the overall error.
However, according to \autoref{thm:thm1},
the best approach to minimize the overall error in $2$-dimensional Hilbert space during the consecutive measurement process
is simply to leave the first measurement device unchanged.

Another noteworthy observation is that when $d=2$
\begin{equation}
\min_{\mathcal{A}'} \big(\varepsilon(\mathcal{A},\mathcal{A}') + \eta(\mathcal{A}',\mathcal{B})\big)=\min_{\mathcal{A}'} \Delta(\mathcal{A},\mathcal{A}',\mathcal{B}),
\end{equation}
Because when $\ket{a_i} = \ket{a'_i}$ for all $i$,
$\Delta(\mathcal{A},\mathcal{A},\mathcal{B}) = \eta(\mathcal{A},\mathcal{B})$ and $\varepsilon(\mathcal{A},\mathcal{A}) = 0$,
revealing that it is no easier to reduce the overall error than the sum of error and disturbance.

\begin{figure}[ht]
    \centering
    \includegraphics[width=3.2in]{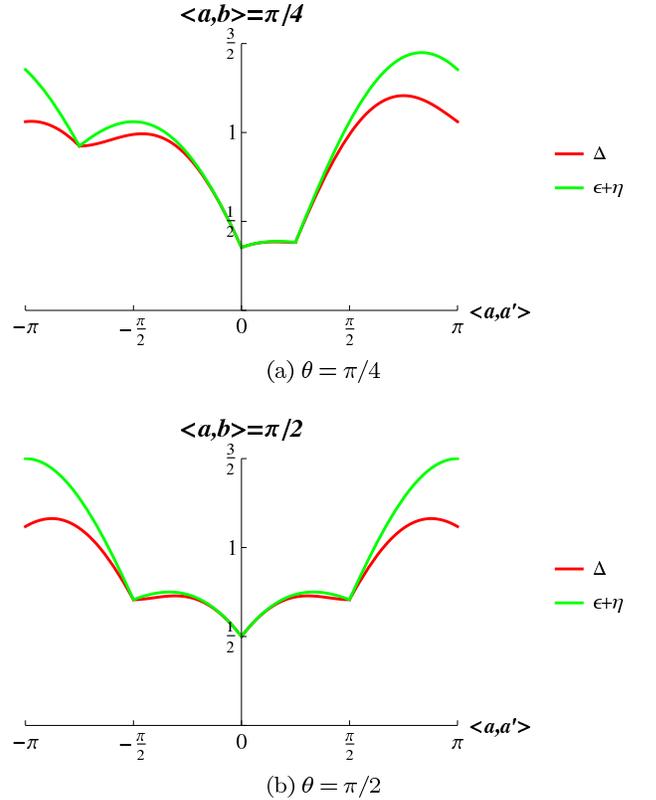}\\
    \caption{(color online) We plot the value of $\varepsilon(\mathcal{A},\mathcal{A}') + \eta(\mathcal{A}',\mathcal{B})$ and $\Delta(\mathcal{A},\mathcal{A}',\mathcal{B})$ respectively
    when $\gv{a'}$ is in $\mathrm{Span}\{\gv{a}, \gv{b}\}$.
    Independent variable $\langle\gv{a},\gv{a'}\rangle$ denotes the angle between $\gv{a}$ and $\gv{a'}$.
    This figure shows clearly that when $\gv{a'} = \gv{a}$, $\varepsilon(\mathcal{A},\mathcal{A}') + \eta(\mathcal{A}',\mathcal{B})$ and $\Delta(\mathcal{A},\mathcal{A}',\mathcal{B})$ are the smallest.}
    \label{fig:sketch}
\end{figure}

Unfortunately, the absence of Bloch representation means that the proof techniques in \autoref{thm:thm1} cannot be used to tackle the trade-off between error and disturbance when $d > 2$.
However, the same conclusion in \autoref{thm:thm1} still works in some special but interesting cases.

\begin{theorem}[Trade-off for Mutually Unbiased Bases in $d$-dimensional Hilbert space]
\label{thm:thm2}
For all $d \geq 2$, if eigenstates of $A$ and $B$ form a pair of mutually unbiased bases (MUB), then we have
\begin{equation}
\varepsilon(\mathcal{A},\mathcal{A}') + \eta(\mathcal{A}',\mathcal{B}) \geq \eta(\mathcal{A},\mathcal{B}) = 1 - 1/d.
\end{equation}
When $\{\ket{a_i'}\}_{i=1}^d$ = $\{\ket{a_i}\}_{i=1}^d$, the equality holds.
\end{theorem}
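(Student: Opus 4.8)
The plan is to avoid the Bloch-sphere machinery of \autoref{thm:thm1} (which is unavailable in dimension $d>2$) and instead return to the state-dependent quantities, testing the consecutive measurement on a single, well-chosen probe state. Since $\varepsilon(\mathcal{A},\mathcal{A}')=\max_\rho\varepsilon_\rho(\mathcal{A},\mathcal{A}')$ and $\eta(\mathcal{A}',\mathcal{B})=\max_\rho\eta_\rho(\mathcal{A}',\mathcal{B})$, for every fixed $\rho$ we trivially have $\varepsilon(\mathcal{A},\mathcal{A}')+\eta(\mathcal{A}',\mathcal{B})\ge\varepsilon_\rho(\mathcal{A},\mathcal{A}')+\eta_\rho(\mathcal{A}',\mathcal{B})$, so it suffices to exhibit one $\rho$ whose right-hand side is already at least $1-1/d$. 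I would take $\rho=\ket{b_1}\bra{b_1}$, an arbitrary eigenstate of $B$; heuristically this is the probe that most clearly sees both the failure of $\mathcal{A}'$ to reproduce $\mathcal{A}$ and its incompatibility with $\mathcal{B}$.

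With this choice, set $q_k\equiv|\braket{b_1|a'_k}|^2$, which is a probability vector. Feeding $\rho$ into the trace formulas for $\varepsilon_\rho$ and $\eta_\rho$ from \autoref{sec:prop}, and using the MUB condition $|\braket{b_1|a_i}|^2=1/d$ for all $i$, gives $\varepsilon_\rho(\mathcal{A},\mathcal{A}')=\max_i|1/d-q_i|$; keeping only the $i=1$ term inside the maximum that defines $\eta_\rho$ gives $\eta_\rho(\mathcal{A}',\mathcal{B})\ge 1-\sum_k q_k^2$ (for this one-sided bound no Perron--Frobenius or phase-alignment step is needed). Thus the theorem reduces to the purely scalar claim that $\max_i|1/d-q_i|+1-\sum_k q_k^2\ge 1-1/d$ for every probability vector $(q_k)_{k=1}^d$, i.e.\ $\max_i|q_i-1/d|\ge\sum_k q_k^2-1/d$. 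This follows from $\sum_k q_k^2-1/d=\sum_k q_k(q_k-1/d)\le\sum_k q_k\,|q_k-1/d|\le(\max_i|q_i-1/d|)\sum_k q_k=\max_i|q_i-1/d|$, using $\sum_k q_k=1$.

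For the equality direction and the value quoted on the right, I would invoke the earlier results directly: taking $\{\ket{a'_i}\}=\{\ket{a_i}\}$ gives $\varepsilon(\mathcal{A},\mathcal{A})=0$ by \equationref{eq:expression_error}, while the MUB condition makes every $\ket{b_i}$ unbiased in $\{\ket{a_j}\}$, so \pref{prop:pp1} yields $\eta(\mathcal{A},\mathcal{B})=1-1/d$; hence $\varepsilon+\eta=1-1/d$ there, which is also the stated value of $\eta(\mathcal{A},\mathcal{B})$.

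I do not anticipate a genuine obstacle: once the probe state is fixed, every step is elementary and essentially forced. The two points worth double-checking are that $\rho=\ket{b_1}\bra{b_1}$ really is the right probe (and that the MUB hypothesis is used exactly where claimed, namely to flatten $\varepsilon_\rho$ to $\max_i|1/d-q_i|$), and that the scalar inequality above is saturated when $q$ is the uniform distribution, which is precisely the configuration $\mathcal{A}'=\mathcal{A}$ and so matches the asserted equality case. As a byproduct, the same displayed chain bounds $\varepsilon_\rho+\eta_\rho$, hence also $\Delta(\mathcal{A},\mathcal{A}',\mathcal{B})$, from below by $1-1/d$.
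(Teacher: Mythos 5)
Your proof is correct, but it takes a genuinely different route from the paper's. The paper works at the level of the state-independent (spectral-radius) expressions: it applies the triangle inequality for the spectral radius of Hermitian matrices to peel off $\eta(\mathcal{A},\mathcal{B})$, which reduces the claim to showing $\bigl|1/d - |\braket{b_i|a_j'}|^2\bigr| \le \sqrt{1-|\braket{a_j|a_j'}|^2}$ for each $j$, and that estimate in turn requires a multi-step Cauchy--Schwarz and trigonometric computation. You instead evaluate the \emph{state-dependent} quantities on the single probe state $\rho=\ket{b_1}\bra{b_1}$, use the MUB hypothesis to flatten $\varepsilon_\rho$ to $\max_i|1/d-q_i|$ with $q_i=|\braket{b_1|a_i'}|^2$, and reduce everything to the one-line scalar inequality $\max_i|q_i-1/d|\ge\sum_k q_k(q_k-1/d)$ for a probability vector. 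Each step checks out (in particular the $i=1$ term of $\eta_\rho$ really does equal $1-\sum_k q_k^2\ge 0$, so no absolute-value or Perron--Frobenius care is needed), and the equality case and the value $\eta(\mathcal{A},\mathcal{B})=1-1/d$ follow from \pref{prop:pp1} exactly as you say. What your approach buys beyond brevity is a strictly stronger conclusion: since you bound $\varepsilon_\rho+\eta_\rho$ for a \emph{single} $\rho$, you obtain $\Delta(\mathcal{A},\mathcal{A}',\mathcal{B})\ge 1-1/d$ directly, whereas the paper's argument bounds only the sum of the two separate maxima and hence yields just $\varepsilon+\eta\ge 1-1/d$; your version therefore also settles the $\Delta$ half of \conjref{thm:conj} in the MUB case.
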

\begin{proof}
According to \equationref{eq:expression_error}, we need to prove the following inequality:
\begin{align}
& \max_j \sqrt{1 - \abs{\braket{a_j|a_j'}}^2} \nonumber \\
& + \max_i R \big(\ket{b_i}\bra{b_i} - \sum_k \abs{\braket{b_i| a_k'}}^2\ket{a_k'}\bra{a_k'} \big) \nonumber \\
\geq{} &
\max_i R \big(\ket{b_i}\bra{b_i} - \sum_k \abs{\braket{b_i| a_k}}^2 \ket{a_k}\bra{a_k} \big).
\end{align}
Using the triangle inequality for spectral radius~\cite{marshall2010inequalities},
the above inequality is true if we can prove that for any $\ket{b_i}$,
\begin{align}
& \max_j \sqrt{1 - \abs{\braket{a_j|a_j'}}^2} \nonumber \\
\geq{} & R \big(\sum_k  \abs{\braket{b_i| a_k}}^2 \ket{a_k}\bra{a_k} - \sum_k  \abs{\braket{b_i| a_k'}}^2 \ket{a_k'}\bra{a_k'} \big) .
\end{align}
Since $\abs{\braket{b_i| a_k}}^2 = 1/d$, the LHS of the above inequality equals to $\max_j \big| 1/d - \abs{\braket{b_i| a_j'}}^2\big|$.
It is then sufficient to prove that
\begin{equation}
\abs{1/d - \abs{\braket{b_i| a_j'}}^2} \le \sqrt{1 - \abs{\braket{a_j|a_j'}}^2}
\end{equation}
for any $j$.
Using $\sum_k \abs{\braket{a_k | a_j'}}^2 = 1$ and $|\braket{b_i| a_k}|^2 = 1/d$,
it is clear that the value of $\abs{\braket{b_i| a_j'}}^2 = \big| \sum_k \braket{b_i| a_k}\braket{a_k| a_j'} \big|^2$
is between
\[
\frac{1}{d} \big( 1 \pm \sum_{k\neq l} \big|\braket{a_k| a_j'}\big|\cdot\big| \braket{a_l| a_j'} \big| \big).
\]
Hence, all we need to prove is that
\begin{equation}
\label{eq:temp_eq}
\frac{1}{d} \sum_{k\neq l} \big|\braket{a_k| a_j'}\big|\cdot\big| \braket{a_l| a_j'} \big| \le \sqrt{1 - \abs{\braket{a_j|a_j'}}^2}.
\end{equation}
Set $x = 1 - \abs{\braket{a_j|a_j'}}^2$. If $x=0$, \equationref{eq:temp_eq} is obviously true. Otherwise,
\begin{align}
&\sum_{k\neq l} \big|\braket{a_k| a_j'}\big|\cdot\big| \braket{a_l| a_j'} \big| \nonumber \\
={}& 2\sqrt{1 - x} \sum_{m \neq j} \big| \braket{a_m| a_j'} \big| + \sum_{\substack{k\neq l\\k,l \neq j}} \big|\braket{a_k| a_j'}\big|\cdot\big| \braket{a_l| a_j'} \big| \nonumber \\
\leq{}& 2\sqrt{(1 - x)(d-1)} \sqrt{\sum_{m \neq j} \big| \braket{a_m| a_j'} \big|^2} \nonumber \\&
\qquad\qquad\qquad + \sum_{\substack{k\neq l\\k,l \neq j}} \frac{\big|\braket{a_k| a_j'}\big|^2+\big| \braket{a_l| a_j'}\big|^2}{2} \nonumber \\
\leq{}& 2\sqrt{(1 - x)(d-1)} \sqrt{x} + x (d-2) .
\end{align}
Because
\begin{align}
& d \sqrt{x} \geq 2\sqrt{(1 - x)(d-1)} \sqrt{x} + x (d-2) \nonumber \\
\Longleftrightarrow{} & d \geq 2\sqrt{(1 - x)(d-1)} + \sqrt{x}(d-2) \nonumber \\
\Longleftrightarrow{} & d \geq 2\sqrt{d-1} \sin\phi  + (d-2) \cos\phi \nonumber \\
\Longleftarrow{} & d \geq \sqrt{(2\sqrt{d-1})^2 + (d-2)^2} = d,
\end{align}
\equationref{eq:temp_eq} is proved, and so is \autoref{thm:thm2}.
\end{proof}

The above result is of great importance because the case that the rank-one projectors of two measurement operators form a pair of MUBs is believed to maximize the sum of error and disturbance.
\autoref{thm:thm2} confirms this belief for projective measurements in any finite-dimensional Hilbert space.
Moreover, \autoref{thm:thm2} tells us again when $\mathcal{A}'=\mathcal{A}$, the sum of error and disturbance is minimized.

It is natural to ask if \autoref{thm:thm1} is true for general measurement operations when the Hilbert space dimension $d>2$.
Our numerical simulations find that while \autoref{thm:thm1} does not hold when $d>2$, the theorem seems to be correct by adopting the following slightly relaxed definition of $\varepsilon({\mathcal A},{\mathcal B})$,
\begin{equation}
\varepsilon(\mathcal{A},\mathcal{B}) = \min_{\tilde{\mathcal B}}
\varepsilon(\mathcal{A},\tilde{\mathcal{B}}),
\end{equation}
where the minimum is over all possible projective measurements with rank one projectors $\tilde{\mathcal B}$ corresponding to the measurement operation $\mathcal{B}$.
This relaxed definition of $\varepsilon$ means that we only distinguish between the classical probability distributions resulting from measurements $\mathcal{A}$ and $\mathcal{B}$ with no regard to the labeling of different measurement outcomes.
With this relaxed definition for $\varepsilon$, we make the following conjecture.

\begin{conjecture}[Trade-off in $d$-dimensional Hilbert space]
\label{thm:conj}
For all $d \geq 2$, the following trade-off between error and disturbance holds:
\begin{equation}
\varepsilon(\mathcal{A},\mathcal{A}') + \eta(\mathcal{A}',\mathcal{B}) \geq f(\mathcal{A},\mathcal{B})
\end{equation}
and
\begin{equation}
\Delta(\mathcal{A},\mathcal{A}',\mathcal{B}) \geq f(\mathcal{A},\mathcal{B}),
\end{equation}
where
\begin{equation}
f(\mathcal{A},\mathcal{B})
= \min \{ \varepsilon(\mathcal{A},\mathcal{B}),\eta(\mathcal{A},\mathcal{B})  \}.
\end{equation}
Moreover, the equalities are both attained either when for all $i$, $\ket{a'_i} = \ket{a_i}$, or when for all $i$, $\ket{a'_i} = \ket{b_i}$.
\end{conjecture}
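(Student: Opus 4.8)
A natural strategy towards \conjref{thm:conj} is as follows. Since $\varepsilon(\mathcal{A},\mathcal{A}')+\eta(\mathcal{A}',\mathcal{B})\ge\Delta(\mathcal{A},\mathcal{A}',\mathcal{B})$ always holds, it is enough to prove the single bound $\Delta(\mathcal{A},\mathcal{A}',\mathcal{B})\ge f(\mathcal{A},\mathcal{B})$ together with its sharpness. Sharpness is the routine half: with $\ket{a'_i}=\ket{a_i}$ for all $i$ one has $\varepsilon_\rho(\mathcal{A},\mathcal{A}')\equiv 0$, hence $\Delta(\mathcal{A},\mathcal{A},\mathcal{B})=\eta(\mathcal{A},\mathcal{B})$; with $\ket{a'_i}=\ket{b_i}$ one has $\eta_\rho(\mathcal{A}',\mathcal{B})\equiv 0$ (dephasing in the eigenbasis of $B$ does not change the $B$-statistics), hence $\Delta(\mathcal{A},\mathcal{B},\mathcal{B})=\varepsilon(\mathcal{A},\mathcal{B})$ because the relaxed $\varepsilon$ disregards the labelling of outcomes. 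So $\min_{\mathcal{A}'}\Delta(\mathcal{A},\mathcal{A}',\mathcal{B})\le f(\mathcal{A},\mathcal{B})$, realised at $\mathcal{A}'=\mathcal{A}$ if $\eta(\mathcal{A},\mathcal{B})\le\varepsilon(\mathcal{A},\mathcal{B})$ and at $\mathcal{A}'=\mathcal{B}$ otherwise. Finally, since $\eta(\mathcal{A}',\mathcal{B})$ does not depend on how the outcomes of $\mathcal{A}'$ are labelled, the universal inequality with the relaxed $\varepsilon$ is equivalent to the same inequality proved for every labelled $\mathcal{A}'$ with $\varepsilon(\mathcal{A},\mathcal{A}')$ taken in the original sense of \equationref{eq:expression_error}; only $f$ itself retains a relaxed $\varepsilon$.

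For the lower bound I would run the triangle inequality for the spectral radius exactly as in the proof of \autoref{thm:thm2}. Write $M_i$ and $M_i'$ for the Hermitian matrices of \equationref{eq:disturbance-spec_rad} built from $\{\ket{a_k}\}$ and from $\{\ket{a'_k}\}$, and set $N_i\equiv M_i-M_i'=\sum_k\abs{\braket{a'_k|b_i}}^2\ket{a'_k}\bra{a'_k}-\sum_k\abs{\braket{a_k|b_i}}^2\ket{a_k}\bra{a_k}$. Then $R(M_i)\le R(M_i')+R(N_i)$ for each $i$ gives $\eta(\mathcal{A},\mathcal{B})\le\eta(\mathcal{A}',\mathcal{B})+\beta$, with a ``bridging'' term $\beta\equiv\max_i R(N_i)$ that measures how differently the dephasings in the $A$- and $A'$-bases act on the eigenstates of $B$. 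The sum bound would then follow from the single state-independent estimate
\begin{equation}
\label{eq:bridging_target}
\beta\;\le\;\varepsilon(\mathcal{A},\mathcal{A}')+\max\{\,0,\ \eta(\mathcal{A},\mathcal{B})-\varepsilon(\mathcal{A},\mathcal{B})\,\},
\end{equation}
because $\eta(\mathcal{A}',\mathcal{B})\ge\eta(\mathcal{A},\mathcal{B})-\beta$ then yields $\varepsilon(\mathcal{A},\mathcal{A}')+\eta(\mathcal{A}',\mathcal{B})\ge\varepsilon(\mathcal{A},\mathcal{A}')+\eta(\mathcal{A},\mathcal{B})-\beta\ge\min\{\eta(\mathcal{A},\mathcal{B}),\varepsilon(\mathcal{A},\mathcal{B})\}=f(\mathcal{A},\mathcal{B})$. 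When the eigenbases of $A$ and $B$ are mutually unbiased the correction term in \equationref{eq:bridging_target} vanishes and the estimate collapses to $\beta\le\varepsilon(\mathcal{A},\mathcal{A}')$, which is precisely the inequality \equationref{eq:temp_eq} established inside the proof of \autoref{thm:thm2}; thus \autoref{thm:thm2} is the special case in which the bridging estimate holds in its strongest possible form. To promote the sum bound to the overall-error bound $\Delta\ge f$ one repeats the argument at the level of a single pure state: fix $\rho_0=\ket{\psi_0}\bra{\psi_0}$ and an outcome $i_0$ realising $f(\mathcal{A},\mathcal{B})$ (via the Perron--Frobenius form behind \equationref{eq:disturbance-spec_rad}, or via \equationref{eq:expression_error}), and bound $\varepsilon_{\rho_0}(\mathcal{A},\mathcal{A}')+\eta_{\rho_0}(\mathcal{A}',\mathcal{B})$ from below by the state-dependent analogue of \equationref{eq:bridging_target} applied to $\abs{\braket{b_{i_0}|(\rho_0-\rho_0')|b_{i_0}}}$, where $\rho_0'=\sum_k\abs{\braket{a'_k|\psi_0}}^2\ket{a'_k}\bra{a'_k}$.

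The crux --- and the main obstacle --- is \equationref{eq:bridging_target} for arbitrary triples of bases, together with its state-level refinement. Unlike the mutually-unbiased situation, where every overlap $\abs{\braket{a_k|b_i}}^2$ equals $1/d$ and $N_i$ is simply diagonal in the $A'$-basis with one controllable entry, $\beta$ genuinely couples the three geometries $\{\ket{a_k}\}$, $\{\ket{a'_k}\}$, $\{\ket{b_i}\}$, and in general $\beta>\varepsilon(\mathcal{A},\mathcal{A}')$ --- which is exactly why \autoref{thm:thm1} fails for $d>2$ and why the bound must drop from $\eta(\mathcal{A},\mathcal{B})$ to $f(\mathcal{A},\mathcal{B})$. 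I would try to tame it by two reductions: first, \pref{pp2} (reducibility) and \pref{pp2_prime} (subsystems) allow one to assume the triple $(A,A',B)$ has no nontrivial common block structure, which in favourable cases either lowers $d$ or separates the $\ket{a'_k}$ into orthogonal clusters on which the explicit two-dimensional computations of \autoref{thm:thm1} and the Frobenius-type bounds \equationref{eq:upper1}--\equationref{eq:upper2} of \pref{pp3} become available; second, one organises the remaining analysis into cases according to whether $\varepsilon(\mathcal{A},\mathcal{B})\le\eta(\mathcal{A},\mathcal{B})$ and to how the $\ket{a'_k}$ are distributed between the $\ket{a_k}$ and the $\ket{b_i}$, using the Perron--Frobenius and Cauchy--Schwarz estimates behind \pref{prop:pp1} and \autoref{thm:thm2} as the basic toolkit. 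The delicate point throughout is that the state optimising the worst-case disturbance need not be an eigenstate of $B$, so the exact cancellation that powers \autoref{thm:thm2} is unavailable and the case bookkeeping has to be carried out with care.
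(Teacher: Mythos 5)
First, be aware that the paper does not prove this statement: it appears only as \conjref{thm:conj}, supported by numerical simulations, so there is no argument of the authors' to compare yours against, and a complete proof would be a genuinely new result. Your proposal is not such a proof. The parts you do establish are the routine ones: proving $\Delta\ge f$ indeed suffices for both inequalities since $\varepsilon+\eta\ge\Delta$; the equality cases at $\mathcal{A}'=\mathcal{A}$ and $\mathcal{A}'=\mathcal{B}$ are straightforward (with the caveat that, because $f$ uses the relaxed $\varepsilon$, the second case requires $\ket{a'_i}=\ket{b_{\sigma(i)}}$ for the relabelling $\sigma$ achieving the minimum); and the remark that the trade-off with relaxed $\varepsilon(\mathcal{A},\mathcal{A}')$ follows from the labelled version because $\eta$ is labelling-independent is sound.

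The actual content of the conjecture is, in your write-up, entirely delegated to the ``bridging'' estimate $\beta\le\varepsilon(\mathcal{A},\mathcal{A}')+\max\{0,\,\eta(\mathcal{A},\mathcal{B})-\varepsilon(\mathcal{A},\mathcal{B})\}$, which you do not prove and yourself flag as the main obstacle. That estimate is essentially as hard as the conjecture: it is merely consistent at the endpoints $\mathcal{A}'=\mathcal{A}$ (where $\beta=0$) and $\mathcal{A}'=\mathcal{B}$ (where $\beta=\eta(\mathcal{A},\mathcal{B})$ and the bound saturates), and it collapses to \equationref{eq:temp_eq} in the mutually unbiased case, but you offer no mechanism for interpolating between these regimes --- \pref{pp2} and \pref{pp2_prime} only apply when the bases share a block or product structure and give inequalities in one direction, so they do not reduce a generic triple of bases to tractable pieces. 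The promotion from the sum bound to $\Delta\ge f$ is likewise only gestured at: fixing one state $\rho_0$ and one outcome $i_0$ does not obviously work, because the states maximising $\varepsilon_\rho$ and $\eta_\rho$ differ and the Perron--Frobenius positivity used in \pref{prop:pp1} need not survive the simultaneous optimisation in \equationref{sind-overall-error_def}. In short, you have correctly identified \autoref{thm:thm2} as the tractable special case and laid out a plausible programme, but the conjecture remains unproved by your argument, exactly as it is in the paper.
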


If our conjecture can be proved, the validity of our trade-off relation will be extended to any operator in Hilbert space.
Our sharp lower bound $f(\mathcal{A},\mathcal{B})$ will then serve as a importance physical quantity characterizing the trade-off between error and disturbance in consecutive measurement process.

\section{Conclusion}
\label{sec:conclusion}

To summarize, we have introduced new definitions of error and disturbance in quantum measurement that are both state- and eigenvalue-independent.
An error-disturbance trade-off for a single qubit as well as for a special case of finite-dimensional Hilbert space have been proven.
In particular, it implies that the lower bound of the sum of error and disturbance can be obtained when $\mathcal{A}'=\mathcal{A}$.
Physically, it means it is not worth sacrificing error to lower the disturbance.
Finally, we introduce \conjref{thm:conj} which is a non-trivial extension of \autoref{thm:thm1} to $d$-dimensional Hilbert space with $d>2$.
Our numerical simulations supports this conjecture; and
if it can be proved, an error-disturbance trade-off just like Heisenberg's uncertainty principle can be derived.

\begin{acknowledgments}
S. S. Zhou and S. Wu are supported by the National Natural Science Foundation of China under Grants No.~11275181 and No.~11475084,
and the Fundamental Research Funds for the Central Universities under Grant No.~20620140531.
H.\ F. Chau is supported in part by the RGC Grant HKU~700712P of the Hong Kong SAR Government.
\end{acknowledgments}

%

\end{document}